\title{Generalized Universe Hierarchies and First-Class Universe Levels}
\author{András Kovács}{Eötvös Loránd University, Hungary}{kovacsandras@inf.elte.hu}{https://orcid.org/0000-0002-6375-9781}{}
\authorrunning{A., Kovács} 
\keywords{type theory, universes} 
\theoremstyle{remark}
\newtheorem{notation}{Notation}
\theoremstyle{definition}
\newtheorem{mydefinition}{Definition}
\newtheorem{mylemma}{Lemma}
\newcommand{\Set}[1]{\mathsf{Set_{#1}}}
\newcommand{\Seti}{\mathsf{Set}}
\newcommand{\Con}{\mathsf{Con}}
\newcommand{\Ty}{\mathsf{Ty}}
\newcommand{\Tm}{\mathsf{Tm}}
\newcommand{\Sub}{\mathsf{Sub}}
\newcommand{\emptycon}{\scaleobj{.75}\bullet}
\newcommand{\U}{\mathsf{U}}
\newcommand{\El}{\mathsf{El}}
\newcommand{\id}{\mathsf{id}}
\newcommand{\ext}{\triangleright}
\newcommand{\blank}{\mathord{\hspace{1pt}\text{--}\hspace{1pt}}}
\newcommand{\mi}[1]{\mathit{#1}}
\newcommand{\p}{\mathsf{p}}
\newcommand{\q}{\mathsf{q}}
\newcommand{\Id}{\mathsf{Id}}
\newcommand{\Nat}{\mathsf{Nat}}
\newcommand{\Bool}{\mathsf{Bool}}
\newcommand{\true}{\mathsf{true}}
\newcommand{\false}{\mathsf{false}}
\newcommand{\up}{{\uparrow}}
\newcommand{\down}{{\downarrow}}
\newcommand{\Lift}{\mathsf{Lift}}
\renewcommand{\tt}{\mathsf{tt}}
\newcommand{\Acc}{\mathsf{Acc}}
\newcommand{\acc}{\mathsf{acc}}
\newcommand{\Lvl}{\mathsf{Lvl}}
\renewcommand{\U}{\mathsf{U}}
\newcommand{\Code}{\mathsf{Code}}
\newcommand{\msf}[1]{\mathsf{#1}}
\newcommand{\uir}{\msf{U^{IR}}}
\newcommand{\elir}{\msf{El^{IR}}}
\newcommand{\ult}{\U_{<}}
\newcommand{\mkMor}{\msf{mk}\!_<}
\newcommand{\unMor}{\msf{un}\!_<}
\newcommand{\mkLvl}{\msf{mk}_{\Lvl}}
\newcommand{\unLvl}{\msf{un}_{\Lvl}}
\begin{document}
\maketitle

\begin{abstract}
In type theories, universe hierarchies are commonly used to increase the
expressive power of the theory while avoiding inconsistencies arising from size
issues. There are numerous ways to specify universe hierarchies, and theories
may differ in details of cumulativity, choice of universe levels, specification
of type formers and eliminators, and available internal operations on levels. In
the current work, we aim to provide a framework which covers a large part of the
design space. First, we develop syntax and semantics for cumulative universe
hierarchies, where levels may come from any set equipped with a transitive
well-founded ordering. In the semantics, we show that induction-recursion can be
used to model transfinite hierarchies, and also support lifting operations on
type codes which strictly preserve type formers. Then, we consider a setup where
universe levels are first-class types and subject to arbitrary internal
reasoning. This generalizes the bounded polymorphism features of Coq and at the
same time the internal level computations in Agda.
\end{abstract}

\section{Introduction}
\label{sec:introduction}

Users of type theories often view universe levels as a bureaucratic detail, a
necessary annoyance in service of boosting expressive power while retaining
logical consistency. However, universe hierarchies are not going away any time
soon in practical implementations of type theory. In recent developments of
systems, we are getting more universes and more adjacent features:
\begin{itemize}
\item Agda recently added a limited cumulativity as an optional feature
  for universes \cite{agdadocs}, and the upcoming 2.6.2 version will extend the $\omega+1$
  universe hierarchy to $\omega*2$.
\item Coq added support for cumulative inductive types \cite{timany18cumulative}
  and a form of bounded universe polymorphism \cite{ziliani15unification}.
\end{itemize}
\noindent At this point, there is a veritable zoo of universe features in existing
implementations. We have perhaps even more design choices when considering the
formal metatheory of type theories. Do type formers stay in the same
universe, or take the $\sqcup$ of universes of constituent types? Can
eliminators target any universe, or do we instead use lifting operators to cross
levels? What kind of universe polymorphism do we have, can we quantify over
level bounds? Is there a type of levels, or are levels in a separate syntactic layer?

The aim of the current work is to develop semantics which covers as much as
possible from the range of sensible universe features. This way, theorists and
language implementors can grab a desired bag of features, and be able to show
consistency of their system by a straightforward translation to one of the
systems in this paper.

\paragraph*{Contributions}

\begin{enumerate}
\item In Section \ref{sec:ttgu} we describe models of type theories where
  universe levels may come from any set with a well-founded transitive ordering
  relation. We specify models as categories equipped with level-indexed diagrams
  of families, as a variation on categories with families. Each morphism of
  levels is mapped to a lifting operation on terms and types. By varying the
  preservation properties of lifting operations, we can describe a range of
  stratification features, from two-level type theory to cumulative universes.
\item In Section \ref{sec:semantics} we use induction-recursion to model the mentioned theories. We
  model the strongest formulations for lifting and universes, namely cumulative universes with
  Russell-style type decoding.
\item In Section \ref{sec:ttfl} we describe type theories with internal types
  for levels and level morphisms, and extend the previous inductive-recursive
  semantics to cover these as well. Here, we can additionally represent various
  universe polymorphism features and level computations.
\end{enumerate}

We provide an Agda formalization of the contents of the paper at
\url{https://github.com/AndrasKovacs/universes/tree/master/agda}. The
formalization is not complete, as we skip proofs involving an excessive number
of equality coercions (which are more suited to informal reasoning, using
equality reflection), and instead focus on the key points.

\section{Metatheory}
\label{sec:metatheory}

We work in a Martin-Löf type theory which has the following features.
\begin{itemize}
  \item Two universes named $\Set0$ and $\Set1$, where $\Set0$ supports inductive-recursive types
    (IR) as specified by Dybjer and Setzer \cite{dybjer99finite}. We may omit the universe indices
    if they can be inferred or if we work over arbitrary indices.
  \item Function extensionality and uniqueness of identity proofs
    (UIP). Additionally, we assume equality reflection in this paper, thus
    working in extensional type theory, to avoid noise from equality transports.
  \item We write function types as $(x : A)\to B$ with $\lambda\,x.\,t$ inhabitants. We may group
    multiple arguments with the same type, as in $(x\,y : A) \to B$.  We have $\Sigma$-types as $(x :
    A) \times B$, with pairing as $(t,\, u)$.  We have $\top$ as the unit type with inhabitant
    $\tt$, $\bot$ as the empty type, and $\Bool$ with $\true$ and $\false$
    inhabitants. Propositional identity is written as $t = u$ (coinciding with definitional
    equality).
  \item We occasionally use $\{x : A\} \to B$ for an Agda-like notation for
    function types with implicit arguments. We usually omit implicit
    applications but may explicitly write them as $t\,\{u\}$.  We may omit
    implicit function types altogether if it is clear where certain variables
    are quantified.
\end{itemize}

\section{Generalized Universe Hierarchies}
\label{sec:ttgu}

In this section, we first describe notions of models for type theories with
generalized universes, and discuss several variations of universes and lifting
operations. Then, we pick a concrete variant (the strongest, in a sense)
and construct a model for it in the metatheory.

For the basic structure of typing contexts and substitutions, let us review
categories with families.

\subsection{Categories with Families}
\label{sec:categories_with_families}

\begin{mydefinition}
A \emph{category with family} (cwf) \cite{Dybjer96internaltype} consists of the following data:
\begin{itemize}
\item A category with a terminal object. We denote the set of objects as $\Con :
  \Seti$ and use capital Greek letters starting from $\Gamma$ to refer to
  objects. The set of morphisms is $\Sub : \Con \to \Con \to \Seti$, and we use
  $\sigma$, $\delta$ and so on to refer to morphisms. The terminal object is
  $\emptycon$ with unique morphism $\epsilon : \Sub\,\Gamma\,\emptycon$. In
  initial models (that is, syntaxes) of type theories, objects correspond to
  typing contexts, morphisms to parallel substitutions and the terminal object to
  the empty context; this informs the naming scheme.
\item A \emph{family structure}, containing $\Ty : \Con \to \Seti$ and $\Tm :
  (\Gamma : \Con) \to \Ty\,\Gamma \to \Seti$, where $\Ty$ is a presheaf over the
  category of contexts and $\Tm$ is a presheaf over the category of elements of
  $\Ty$. This means that both types ($\Ty$) and terms ($\Tm$) can be
  substituted, and substitution has functorial action. We use $A$, $B$, $C$ to
  refer to types and $t$, $u$, $v$ to refer to terms, and use $A[\sigma]$ and
  $t[\sigma]$ for substituting types and terms. Additionally, a family structure
  has \emph{context comprehension} which consists of a context extension
  operation $\blank\ext\blank : (\Gamma : \Con) \to \Ty\,\Gamma \to \Con$
  together with an isomorphism $\Sub\,\Gamma\,(\Delta\ext A) \simeq ((\sigma :
  \Sub\,\Gamma\,\Delta) \times \Tm\,\Gamma\,(A[\sigma]))$ which is natural in
  $\Gamma$.
\end{itemize}
\end{mydefinition}

\noindent From the comprehension structure, we recover the following notions:

\begin{itemize}
\item By going right-to-left along the isomorphism, we recover \emph{substitution extension}
      $\blank,\blank : (\sigma : \Sub\,\Gamma\,\Delta) \to \Tm\,\Gamma\,(A[\sigma]) \to \Sub\,\Gamma\,(\Delta\ext A)$. This means
      that starting from $\epsilon$ or the identity substitution $\id$, we can iterate $\blank,\blank$
      to build substitutions as lists of terms.
\item By going left-to-right, and starting from $\id : \Sub\,(\Gamma\ext A)\,(\Gamma\ext A)$, we recover
      the \emph{weakening substitution} $\p : \Sub\,(\Gamma\ext A)\,\Gamma$ and the \emph{zero variable}
      $\q : \Tm\,(\Gamma\ext A)\,(A[\p])$.
\item By weakening $\q$, we recover a notion of variables as De Bruijn indices. In general, the $n$-th
      De Bruijn index is defined as $\q[\p^{n}]$, where $\p^{n}$ denotes $n$-fold composition.
\end{itemize}

There are other ways for presenting the basic categorical structure of models,
which are nonetheless equivalent to cwfs, including natural models
\cite{awodey18natural} and categories with attributes \cite{cartmellthesis}. We
use the cwf presentation for its immediately algebraic character and closeness
to conventional explicit substitutions. We consider the syntax of a type theory
to be its initial model.

\begin{notation}As De Bruijn indices are hard to read, we will mostly use
nameful notation for binders. For example, assuming $\Nat : \{\Gamma : \Con\}
\to \Ty\,\Gamma$ and $\Id : \{\Gamma : \Con\}(A : \Ty\,\Gamma) \to
\Tm\,\Gamma\,A \to \Tm\,\Gamma\,A \to \Ty\,\Gamma$, we may write $\emptycon \ext
(n : \Nat) \ext (p : \Id\,\Nat\,n\,n)$ for a typing context, instead of using
numbered variables or cwf combinators as in $\emptycon \ext \Nat \ext
\Id\,\Nat\,\q\,\q$.
\end{notation}

\begin{notation}
In the following, we will denote families by ($\Ty$,$\Tm$) pairs and overload context
extension $\blank\ext\blank$ for different families.
\end{notation}

A family structure may be closed under certain \emph{type formers}. For example,
we may close a family over function types by assuming $\Pi : (A : \Ty\,\Gamma)
\to \Ty\,(\Gamma\ext A) \to \Ty\,\Gamma$ together with abstraction, application,
$\beta\eta$-rules, and equations for the action of substitution on type and term
formers.

In the following, whenever we introduce a type or term former, we always assume
that it is natural with respect to substitution, i.e.\ all type and term formers
have a corresponding substitution rule. This convention could be made precise by
working in a framework for higher-order abstract syntax, where all specified
structure is automatically stable under substitution
\cite{cubicalnorm,uemura,bocquet2021relative}. While this can be effective at
reducing formal clutter, this paper only presents models which are technically
straightforward, so we choose not to use higher-order signatures, in order to
make the presentation more direct.

\subsection{Morphisms and Inclusions of Families}
\label{sec:morphisms}

In the rest of the paper we make use of categories equipped with possibly
multiple family structures, which serves as basis for specifying universe
hierarchies. However, it is not very useful to simply have multiple copies of
family structures together with their type formers. In that case, every
constructor and eliminator of every type former stays in the same family, and
there is no interaction between families, and the most we can do is to mix them
together in typing contexts. In this subsection we describe several ways of
crossing between families.

\begin{mydefinition}
A \emph{family morphism} $F$ between ($\Ty_0$, $\Tm_0$) and ($\Ty_1$, $\Tm_1$)
families consists of natural transformations mapping types to types and terms to
terms, which preserves context extensions up to context isomorphism, i.e.\ we
have that $(\Gamma \ext F\,A) \simeq (\Gamma \ext A)$, where $\simeq$ denotes
existence of an invertible context morphism.
\end{mydefinition}

Family morphisms are restrictions of so-called \emph{weak morphisms}
\cite{dependentrightadjoints} (or \emph{pseudomorphisms}
\cite{kaposi2019gluing}) of cwfs: a weak morphism which has the identity action
on the base category is exactly a family morphism.

\begin{mylemma} Every family morphism has invertible action on terms, i.e.\ there
  is an $F^{-1} : \Tm\,\Gamma\,(F\,A) \to \Tm\,\Gamma\,A$.
\end{mylemma}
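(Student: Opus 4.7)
The plan is to exhibit the inverse by extracting a generic term from the context isomorphism provided by the family morphism. Let $\phi : \Sub\,(\Gamma \ext F\,A)\,(\Gamma \ext A)$ denote the given invertible context morphism.

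Since $F$ is a pseudomorphism of cwfs with identity action on the base category, the coherence of $\phi$ with the comprehension structure forces $\p \circ \phi = \p$ and $\p \circ \phi^{-1} = \p$. By the comprehension isomorphism I may then decompose $\phi = (\p,\,v)$ for a unique generic term $v : \Tm\,(\Gamma \ext F\,A)\,(A[\p])$, and $\phi^{-1} = (\p,\,w)$ with $w : \Tm\,(\Gamma \ext A)\,((F\,A)[\p])$. The same coherence identifies $w$ with $F\,\q$, so naturality of $F$ on terms yields $F\,t = w[\id,\,t]$ for every $t : \Tm\,\Gamma\,A$.

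I then set $F^{-1}(t) := v[\id,\,t]$ for $t : \Tm\,\Gamma\,(F\,A)$. Taking the second components of the equations $\phi \circ \phi^{-1} = \id$ and $\phi^{-1} \circ \phi = \id$ gives the key identities $v[\phi^{-1}] = \q$ and $w[\phi] = \q$. Using these I compute
\[
F^{-1}(F\,t) = v[\id,\,w[\id,\,t]] = v[\phi^{-1} \circ (\id,\,t)] = v[\phi^{-1}][\id,\,t] = \q[\id,\,t] = t,
\]
and symmetrically $F(F^{-1}(t)) = w[\id,\,v[\id,\,t]] = w[\phi \circ (\id,\,t)] = w[\phi][\id,\,t] = \q[\id,\,t] = t$.

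The main obstacle will be making precise the cwf-morphism coherences I have just used, namely the compatibility $\p \circ \phi = \p$ with the weakening projection and the identification $F\,\q = w$. Both are implicit in the interpretation of family morphisms as weak cwf morphisms with identity base action, but the definition in the text only spells out context isomorphism together with naturality of the type and term actions, so some unwinding of the weak-morphism coherences is required. Once these are in hand, the verification above reduces to a routine calculation with the inverse equations of $\phi$ and the universal property of context extension.
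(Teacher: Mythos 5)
Your proof is correct and is essentially the paper's own argument: your generic term $v$ is the paper's $\q'$, your key identities $w[\phi] = \q$ and $v[\phi^{-1}] = \q$ are exactly its $F\,\q' = \q$ and $\q'[\p,\,F\,\q] = \q$, and your definition $F^{-1}\,t = v[\id,\,t]$ together with the two round-trip computations matches its proof step for step. Your closing caveat is also apt, and applies equally to the paper: its proof likewise reads the $\ext$-preservation isomorphism as the canonical comparison map of a weak cwf morphism with identity base action (so that it commutes with $\p$ and relates $\q$ to $F\,\q$), rather than as a bare invertible context morphism as the definition literally states.
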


\begin{proof}
From the $\ext$-preservation isomorphism and the defining isomorphisms of
comprehension, we get $\q' : \Tm\,(\Gamma \ext F\,A)\,(A[\p])$ such that $F\,\q'
= \q$ and $\q'[\p,\,F\,\q] = \q$. Now, for $t : \Tm\,\Gamma\,(F\,A)$, we
define $F^{-1}\,t$ as $\q'[\id,\,t] : \Tm\,\Gamma\,A$. We get the following:
\begin{alignat*}{3}
  & F(F^{-1}\,t) = F(\q'[\id,\,t]) = (F\,\q')[\id,\,t] = \q[\id,\,t] = t \\
  & F^{-1}(F\,t) = \q'[\id,\,F\,t] = \q'[\id,\,(F\,\q)[\id,\,\,t]] = \q'[\p,\,F\,\q][\id,\,t]
    = \q[\id,\,t] = t
\end{alignat*}
More concisely, $F$ is invertible on the generic term $\q$, which implies invertibility
on any term.
\end{proof}

\begin{notation}
In the following, we will write $\Lift : \Ty_0\,\Gamma \to \Ty_1\,\Gamma$ for
the action of some morphism on types, $\up : \Tm_0\,\Gamma\,A \to
\Tm_1\,\Gamma\,(\Lift\,A)$ for the action on terms, and $\down$ for the inverse
action on terms. We will also call the action on types \emph{type lifting}
and the action on terms \emph{term lifting}.
\end{notation}

We may think about the relation between \emph{modalities} and morphisms. The
main difference is that morphisms impose no structural restrictions on variables
and contexts. More concretely, every $\Lift$ is dependent right adjoint
\cite{dependentrightadjoints} to the identity functor on the base category, as
we have $\Tm\,(\msf{Id}\,\Gamma)\,A \simeq \Tm\,\Gamma\,(\Lift\,A)$. Hence,
every morphism can be viewed as a degenerate modality.

Assume family structures ($\Ty_0$, $\Tm_0$) and ($\Ty_1$, $\Tm_1$) and a
morphism between them. This corresponds to a basic version of \emph{two-level
type theory} \cite{twolevel}. This theory has an interpretation in presheaves
over the category of contexts of some chosen model of a type theory, where
($\Ty_0$, $\Tm_0$) is modeled using structure in the chosen model, and ($\Ty_1$,
$\Tm_1$) is modeled using presheaf constructions. More illustratively, this
means interpreting ($\Ty_1$, $\Tm_1$) as a metaprogramming layer which can
generate object-level constructions in the ($\Ty_0$, $\Tm_0$) layer. Lifted
types correspond to types of object-level terms; for example, $\Bool_0 :
\Ty_0\,\Gamma$ is the object-level type of Booleans, while $\Lift\,\Bool_0$ is
the meta-level type of $\Bool_0$-terms, and $\Bool_1 : \Ty_1\,\Gamma$ is the
type of meta-level Booleans. It is possible to compute a $\Bool_0$ from a
$\Bool_1$. Given $b : \Tm_1\,\Gamma\,\Bool_1$, we can construct
$\down(\mathsf{if}\,b\,\mathsf{then}\,\up\true_0\,\mathsf{else}\,\up\false_0) :
\Tm_0\,\Gamma\,\Bool_0$. But there is no way to compute a $\Bool_1$ from a
$\Bool_0$: we can try to lift the input, but there is no elimination rule for
$\Lift\,\Bool_0$ in $\Ty_1$.

Hence, plain family morphisms can model a metaprogramming hierarchy, but
currently we are aiming for ``sizing'' hierarchies instead. This means that we
want to eliminate from any family to any other family which is connected by a
morphism.

\begin{mydefinition}\label{def:inclusion}
A \emph{family inclusion} is a family morphism which preserves all type and term
formers. This assumes that every type former which is contained in the source
family, is also contained in the target family.
\end{mydefinition}

\noindent Some examples for preservation equations for type and term formers:
\begin{alignat*}{3}
  & \Lift\,(\Pi\,(x : A) B)   && =\,\,\,\,&& \Pi\,(x : \Lift\,A)(\Lift\, (B[x \mapsto\,\down x]))\\
  & \up(\lambda\,(x : A).\,t) && =&& \lambda\,(x : \Lift\,A).\,\up(t[x\mapsto\down x])\\
  & \Lift\,\Bool_0            && =&& \Bool_1\\
  & \up\true_0                && =&& \true_1
\end{alignat*}
In general, we can skip specifying preservation for $\down$, since it follows
from $\up$ preservation equations.

Assume an inclusion from ($\Ty_0$, $\Tm_0$) to ($\Ty_1$, $\Tm_1$). Now, we can
eliminate from $\Bool_0$ to $\Bool_1$. If we have some $b :
\Tm_0\,\Gamma\,\Bool_0$, we also have $\up b :
\Tm_1\,\Gamma\,(\Lift\,\Bool_0)$, hence $\up b :
\Tm_1\,\Gamma\,\Bool_1$. Then, we can use $\Bool_1$ elimination, as in
$\mathsf{if}\,\up b\,\,\mathsf{then}\,\true_1\,\mathsf{else}\,\false_1 :
\Tm_1\,\Gamma\,\Bool_1$. The $\up$ computation ensures that the eliminator
computes appropriately on canonical terms: if $b$ is $\true_0$, we get
$\up \true_0 = \true_1$ as the if-then-else scrutinee.

A family inclusion corresponds to a \emph{cumulative hierarchy} consisting of
two families: every type former of the smaller family is included in the larger
family, with the same elimination rules.

\begin{mydefinition}\label{def:strict_inclusion}
A \emph{strict family inclusion} between ($\Ty_0$, $\Tm_0$) and ($\Ty_1$,
$\Tm_1$) is a family inclusion ($\Lift$, $\up$, $\down$) for which the following equations hold:
\begin{alignat}{2}
  & (\Gamma \ext \Lift\,A) &&= (\Gamma \ext A)     \label{eq:cumcon}    \\
  & \Tm_1\,\Gamma\,(\Lift\,A) &&= \Tm_0\,\Gamma\,A  \label{eq:cumtm}     \\
  & \up t &&= t                                    \label{eq:cumlift}
\end{alignat}
\end{mydefinition}

A strict inclusion corresponds to Sterling's \emph{algebraic cumulativity}
\cite{sterling2019algebraic}. The additional equations are a matter of
convenience: they allow us to omit term liftings in informal syntax\footnote{In
  a proof assistant, often we would still have to explicitly transport along the
  strict inclusion equations.}. Most of the time we can also omit level
annotations on term formers. For example, we have $\true_0 :
\Tm_0\,\Gamma\,\Bool_0$, but also $\true_0 : \Tm_0\,\Gamma\,(\Lift\,\Bool_0)$,
hence $\true_0 : \Tm_0\,\Gamma\,\Bool_1$. Moreover, $\true_0$ is definitionally
equal to $\true_1$, since $\true_0 =\,\up \true_0 = \true_1$. Thus, using
simply $\true$ is fine whenever the family is clear from context.

The definitional equality of $\true_0$ and $\true_1$ is important; without it
canonicity would fail, since $\true_0$, $\false_0$, $\true_1$ and $\false_1$
would be four definitionally distinct inhabitants of $\Bool_1$. See Luo
\cite{luo2012notes} for a discussion of related issues with cumulativity. It is
not sufficient to specify a strict inclusion just by equations \ref{eq:cumcon}
and \ref{eq:cumtm} in Definition \ref{def:strict_inclusion}. We need $\up$
together with equation \ref{eq:cumlift} to identify term formers in different
families. The other direction $\down t = t$ is immediately derivable.

\subsection{Level Structures}
\label{sec:level_structures}

We would like to describe a range of setups with multiple families and morphisms
between them. In this subsection we describe the indexing structures for such
family diagrams. First, we specify a notion of well-foundedness, which will be
used to preclude size paradoxes in universe hierarchies.

\begin{mydefinition}
The \emph{accessibility predicate} on relations is defined by the following
inductive rules:
\begin{alignat*}{3}
  & \Acc : \{A : \Seti\}\to(R : A \to A \to \Seti) \to A \to \Seti \\
  & \acc : \{a : A\} \to ((\mi{a'} : A) \to R\,\mi{a'}\,a \to \Acc\,R\,\mi{a'}) \to \Acc\,R\,a
\end{alignat*}
\end{mydefinition}
\noindent See \cite{aczel1977introduction} and \cite[Section~10.3]{hottbook} for
further exposition. An inhabitant of $\Acc\,R\,a$ proves that starting from $a :
A$, all descending $R$-chains must be finite. This is ensured by the universal
property of the inductive definition.

\begin{mylemma}\label{lem:accprop}
All inhabitants of $\Acc\,R\,a$ are equal \cite[Lemma 10.3.4]{hottbook}. In
other words, accessibility is proof-irrelevant.
\end{mylemma}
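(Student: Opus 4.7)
The plan is to prove the stronger statement $(p\,q : \Acc\,R\,a) \to p = q$ by induction on the first argument $p$, using only the eliminator of $\Acc$, function extensionality, and UIP (all of which the metatheory from Section~\ref{sec:metatheory} supplies).

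First I fix $R$ and do induction on $p : \Acc\,R\,a$. The only case is $p = \acc\,f$ for some $f : (a' : A) \to R\,a'\,a \to \Acc\,R\,a'$. The induction hypothesis reads
\[
  \mathit{ih} : (a' : A)(r : R\,a'\,a)(q' : \Acc\,R\,a') \to f\,a'\,r = q'.
\]
I now need to show $(q : \Acc\,R\,a) \to \acc\,f = q$. Since $\Acc$ is inductive with a single constructor, I can pattern-match on $q$ as $\acc\,g$ for some $g : (a' : A) \to R\,a'\,a \to \Acc\,R\,a'$. By congruence of $\acc$, it suffices to prove $f = g$.

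Next, I apply function extensionality twice to reduce $f = g$ to the pointwise equality $f\,a'\,r = g\,a'\,r$ for every $a' : A$ and $r : R\,a'\,a$. Both sides are inhabitants of $\Acc\,R\,a'$, and this is precisely an instance of the induction hypothesis: $\mathit{ih}\,a'\,r\,(g\,a'\,r) : f\,a'\,r = g\,a'\,r$. This closes the induction.

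The only mildly delicate point is the case analysis on $q$: in an intensional setting one would have to argue that any $q : \Acc\,R\,a$ is propositionally equal to some $\acc\,g$, which requires the dependent eliminator applied to $q$ with motive $\lambda q.\,\acc\,f = q$. In the extensional metatheory used here, equality reflection and UIP absorb the resulting coercions, so the informal pattern-match is harmless. Everything else is routine congruence and funext bookkeeping, so I do not expect a real obstacle.
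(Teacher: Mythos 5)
Your proof is correct and follows essentially the same route as the paper, which gives no argument of its own but defers to \cite[Lemma~10.3.4]{hottbook}: there, as in your proposal, one does induction on the first accessibility witness, case analysis on the second, and concludes by congruence of $\acc$, function extensionality, and the inductive hypothesis. Your flagged ``delicate point'' is also resolved correctly: the inner motive $\lambda\,q.\,\acc\,f = q$ fixes the index $a$ and so is not literally a legal motive for the indexed eliminator, but generalizing it over the index with an equality hypothesis (or using an inversion/eta lemma for $\acc$) repairs this, and in the paper's extensional metatheory equality reflection and UIP make the resulting coercions disappear exactly as you claim.
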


\begin{mydefinition}
A relation $R : A \to A \to \Seti$ is \emph{well-founded} if $(a : A) \to
\Acc\,R\,a$.
\end{mydefinition}

\begin{mydefinition} A \emph{level structure} consists of the following components:
\begin{alignat*}{3}
  & \Lvl                  &&: \Set0 \\
  & \blank\!<\!\blank     &&: \Lvl \to \Lvl \to \Set0 \\
  & \mathsf{<\!prop }     &&: (p\,q : i < j) \to p = q \\
  & \blank\!\circ\!\blank &&: j < k \to i < j \to i < k \\
  & \mathsf{<\!wf}        &&: (i : \Lvl) \to \Acc\,<\,i
\end{alignat*}
\end{mydefinition}

\noindent We overload $\Lvl$ to refer to a given level structure and also its
underlying set. In short, a level structure is a set together with a transitive
well-founded relation.

\begin{mydefinition}
A \emph{family diagram} over $\Lvl$ maps each $i : \Lvl$ to a family structure
($\Ty_i$, $\Tm_i$), and each $p : i < j$ to a family inclusion
($\Lift_{i}^{j}\,p$, $\up_{i}^{j}p$, $\down_{i}^{j}p$) between ($\Ty_i$,
$\Tm_i$) and ($\Ty_j$, $\Tm_j$). Moreover, the mapping is functorial, so
$\Lift_{i}^{k}\,(p\circ q)\,A = \Lift_{j}^{k}\,p\,(\Lift_{i}^{j}\,q\,A)$, and
similarly for $\up_{i}^{j}p$ and $\down_{i}^{j}p$. A \emph{strict family diagram}
is a family diagram where each inclusion is strict.
\end{mydefinition}

\begin{notation}
Sometimes we omit some of the $i$, $j$, $p$ annotations from type and term
liftings, if they are clear from context.
\end{notation}

Our choice of level structures and diagrams is motivated by the following.  First, we do not need
identity morphisms in levels, because they would be mapped to trivial liftings, which are not
interesting in our setting. Second, we do not need proof-relevant level morphisms, since any
parallel pair of morphisms gives rise to isomorphic types. Concretely, given $p : i < j$ and $q : i
< j$ such that $p \neq q$, we have $\Tm_j\,\Gamma\,(\Lift\,p\,A) \simeq \Tm_i\,\Gamma\,A \simeq
\Tm_j\,\Gamma\,(\Lift\,q\,A)$, and since $\Lift\,p\,A$ and $\Lift\,q\,A$ are in the same family, we
can internally prove them isomorphic using function types and identity types. That said, every
construction in this paper would still work with direct categories as level structures.

\subsection{Universes}
\label{sec:universes}

At this point, we can talk about family diagrams, but no previously seen type
former depends on levels in an interesting way. For example, $\Bool_i$ has the
same inhabitants as $\Bool_j$, for any $i$ and $j$. Universes introduce
dependency on levels, by serving as classifiers for smaller families internally
to larger families.

\begin{mydefinition} A family diagram supports \emph{universe formation} if it supports the following:
\begin{alignat*}{3}
  & \U             &&: (i\,j : \Lvl) \to i < j \to \Ty_j\,\Gamma\\
  & \mathsf{LiftU} &&: \Lift_{j}^{k}\,p\,(\U\,i\,j\,q) = \U\,i\,k\,(p \circ q)
\end{alignat*}
\end{mydefinition}
We also need a way to pin down universes as classifiers. We consider two variants.
\begin{mydefinition}
A family diagram has \emph{Coquand universes} \cite{coquandnorm} if it has universe formation and
additionally supports $\El : \Tm_j\,\Gamma\,(\U\,i\,j\,p) \to \Ty_i\,\Gamma$, and
its inverse $\Code : \Ty_i\,\Gamma \to \Tm_j\,\Gamma\,(\U\,i\,j\,p)$.
\end{mydefinition}
\begin{mydefinition}
A family diagram has \emph{Russell universes} if it has Coquand universes and
additionally satisfies $\Tm_j\,\Gamma\,(\U\,i\,j\,p) = \Ty_i\,\Gamma$ and
$\El\,t = t$.
\end{mydefinition}

The move from Coquand to Russell universes is fairly similar to the move from
inclusions to strict inclusions. The Russell variant makes it possible to
informally omit $\El$ and $\Code$. Likewise, the $\El\,t = t$ condition ensures
appropriate naturality. If we only assumed $\Tm_j\,\Gamma\,(\U\,i\,j\,p) =
\Ty_i\,\Gamma$ but not Coquand universes, we would not be able to prove that a
$t : \Tm_j\,\Gamma\,(\U\,i\,j\,p)$ substituted \emph{as a term} is the same
thing as $t$ substituted \emph{as a type}. Both would be written as $t[\sigma]$
in our notation, but they involve different $\blank[\blank]$ operations.

Unlike every other type or term former, there is no lifting computation rule for
$\El$ and $\Code$. Intuitively, the issue is that we would need to relate type
lifting and term lifting, but while term lifting is invertible, type lifting is
not. $\Lift$ sends a $\Ty_i\,\Gamma$ to a $\Ty_j\,\Gamma$, and $\Ty_j\,\Gamma$
is not isomorphic to $\Ty_i\,\Gamma$, because it contains more universes. So,
for example, lifting $\Bool_0 : \Ty_0\,\Gamma$ as a type to $\Ty_1\,\Gamma$
yields $\Bool_1$, but lifting $\Bool_0$ as a term yields $\Bool_0$.

Assuming Coquand or Russell universes and $p : i < j$, we can recover
polymorphic functions, for example, we may have $\mi{id} : \Pi(A : \U\,i\,j\,p)
(\Lift\,p\,(\El\,A) \to \Lift\,p\,(\El\,A))$ for the polymorphic identity
function. Here, we quantify over terms of $\U$, and since every type former
stays on the same level (including $\Pi$), we have to $\Lift$ the types in the
codomain to match the level of the domain. We can also recover large
elimination, for example as in
\[
(\lambda\,(b : \Bool_j).\,\mathsf{if}\,b\,\mathsf{then}\,\Code\,\top_i\,\mathsf{else}\,\Code\,\bot_i)
: \Tm_j\,\Gamma\,(\Bool_j \to \U\,i\,j\,p).
\]

\section{Semantics}
\label{sec:semantics}

In this section we give a model for a type theory with generalized
universes. Let us make the notion of model concrete first.

\begin{mydefinition}[Notion of model for a type theory with generalized universes (TTGU)]
Fix a $\Lvl$ structure. A model for TTGU consists of
\begin{enumerate}
  \item A base category ($\Con$, $\Sub$) with a terminal object $\emptycon$.
  \item A strict family diagram ($\Ty_i$, $\Tm_i$) over $\Lvl$, supporting Russell
        universes, and each family structure is closed under the same basic type formers.
\end{enumerate}
\end{mydefinition}

The choice of available basic type formers is up to personal taste, and it will
not significantly affect the following model construction.

Both in families and universes we choose the stricter formulation, since if we
give a model which proves the strict syntax consistent, we immediately get a model which
proves the weak syntax consistent\footnote{We always get \emph{initial} and \emph{terminal}
models automatically, because of the algebraic character of the theories in this paper. We
also get a \emph{freely generated} strict model from a weak model, from the left adjoint
of the functor which forgets the strictness equations. But none of these tricks can be used
to automatically get a consistency proof.}.

\subsection{Inductive-Recursive Codes}
\label{sec:inductive_recursive_codes}

The task is to interpret the $\Lvl$-many universes of TTGU using an assumed
metatheoretic feature. For this, we need to define a $\Lvl$-indexed type of type
codes. Since $\Lvl$ and $\!\blank<\!\blank$ can be arbitrary, we effectively
need to define transfinite hierarchies of codes. We use an inductive-recursive
\cite{dybjer99finite} definition for the following reasons.

First, induction-recursion is already supported in the Agda proof assistant, and
it is very useful to be able to sketch out ideas in a machine-checked
setting. It would be much harder to do the same when developing semantics in set
theory.

Second, could we use type-theoretic features with simpler specifications than
induction-recursion, such as super universes \cite{Palmgren98onuniverses} or
Mahlo universes \cite{setzer00mahlo}? These are sufficient to model transfinite
hierarchies. However, using these it is not clear how to additionally support
the strict type former preservation property of $\Lift$\footnote{Palmgren calls
this property as having \emph{recursive sub-universes}
\cite{Palmgren98onuniverses}.}.

Therefore, we give a custom definition using induction-recursion, which
corresponds more directly to TTGU structure. Our definition is essentially the
same as McBride's redundancy-free hierarchy in
\cite[Section~6.3.1]{mcbride2015datatypes}, but we generalize levels from
natural numbers to arbitrary level structures.

\begin{mydefinition}[Codes for the universe]
Assume $i : \Lvl$ and $f : (j : \Lvl) \to j < i \to \Set0$. We define $\uir$ and
$\elir$ by induction-recursion:
\begin{alignat*}{5}
  & \uir   &&: \Set0
     && \rlap{$\elir : \uir \to \Set0$} \\
  & \U'    &&: (j : \Lvl) \to j < i \to \uir
     && \elir\,(\U'\,j\,p) &&= f\,j\,p \\
  & \Pi'   &&: (A : \uir) \to (\elir\,A \to \uir) \to \uir \hspace{2em}
     && \elir\,(\Pi'\,A\,B) &&= (a : \elir\,A) \to \elir\,(B\,a)\\
  & \bot' &&: \uir
     && \elir\,\bot'      &&= \bot\\
  & \Bool' &&: \uir
     && \elir\,\Bool'     &&= \Bool
\end{alignat*}
\end{mydefinition}
We use the prime accents ($'$) to disambiguate inductive-recursive codes from
type formers in TTGU or the metatheory. For basic type formers, we only include
codes for function types, the empty type, and $\Bool$. Other type formers are
straightforward to add (and we do have more in the Agda formalization).

\begin{notation}
We may write $\uir_{i\,f}$ and $\elir_{i\,f}$ in order to make parameters explicit.
\end{notation}

($\uir$, $\elir$) can be viewed as a \emph{universe operator}: given semantics
for an initial segment of $\Lvl$ (given by $i$ and $f$), we create a new
universe which is closed under basic type formers, and also closed under all
sets in $f$ by the way of $\U'$. Most importantly, this operation can be
transfinitely iterated. We first define universes for initial segments of
$\Lvl$, by induction on the accessibility of levels:
\begin{alignat*}{3}
  & \ult\,: (i : \Lvl)\{p : \Acc\,(\blank\!<\!\blank)\,i\} \to (j : \Lvl) \to j < i \to \Set0 \\
  & \ult\,i\,\{\acc\,f\}\,j\,p = \uir_{j\,(\ult\,j\,\{f\,j\,p\})}
\end{alignat*}

\begin{mydefinition}[Semantic universe]
Since every level is accessible, we can define the full semantic hierarchy and
its decoding function.
\begin{alignat*}{3}
  &\U : \Lvl \to \Set0
     && \El : \{i : \Lvl\} \to \U\,i \to \Set0 \\
  &\U\,i = \uir_{i\,(\ult\,i\,\{\mathsf{<\!wf}\,i\})}\hspace{1em}
     && \El\,\{i\} = \elir_{i\,(\ult\,i\,\{\mathsf{<\!wf}\,i\})}
\end{alignat*}
\end{mydefinition}

\begin{mylemma}\label{lem:ucomp}
Assuming $p : i < j$, we have the computation rule $\ult\,j\,p = \U\,j$. Proof: we
may assume that any witness for $\Acc\,(\blank\!<\!\blank)\,i$ is of the form $\acc\,f$ for
some $f$. Then the equation becomes $\uir_{j\,(\ult\,j\,\{f\,j\,p\})} =
\uir_{j\,(\ult\,j\,\{\mathsf{<\!wf}\,j\})}$, but by Lemma \ref{lem:accprop}
the $f\,j\,p$ and $\mathsf{<\!wf}\,j$ witnesses are equal. \qed
\end{mylemma}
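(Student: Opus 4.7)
The plan is to unfold both sides using the defining equation of $\ult$ and then invoke proof-irrelevance of accessibility (Lemma \ref{lem:accprop}). Because $\Acc$ has the single constructor $\acc$, any inhabitant of $\Acc\,(\blank\!<\!\blank)\,i$ is equal to $\acc\,f$ for some $f$, which is exactly what the pattern-matching clause $\ult\,i\,\{\acc\,f\}\,j\,p = \uir_{j\,(\ult\,j\,\{f\,j\,p\})}$ requires to fire.

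First I would rewrite the implicit accessibility witness appearing in the left-hand side into its canonical $\acc$-form, so that $\ult\,j\,p$ reduces via the defining clause to $\uir_{j\,(\ult\,j\,\{f\,j\,p\})}$ for an appropriate $f$. On the other side, unfolding the definition of $\U$ gives $\U\,j = \uir_{j\,(\ult\,j\,\{\mathsf{<\!wf}\,j\})}$. The remaining task is thus to show that the second arguments to $\uir$ agree, i.e.\ that $\ult\,j\,\{f\,j\,p\}$ and $\ult\,j\,\{\mathsf{<\!wf}\,j\}$ are the same partial application of $\ult$.

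Next I would observe that $f\,j\,p$ and $\mathsf{<\!wf}\,j$ are both inhabitants of $\Acc\,(\blank\!<\!\blank)\,j$, so Lemma \ref{lem:accprop} identifies them; congruence along this equality identifies the two partial applications of $\ult$, and hence the two outer $\uir$ codes, which closes the proof.

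The only real content is the appeal to proof-irrelevance of $\Acc$. Without Lemma \ref{lem:accprop}, distinct accessibility witnesses at the same level would in general give rise to distinct inductive-recursive universes, and the intended computation rule would fail. That lemma is precisely what makes $\ult$ behave uniformly in the accessibility proof, and once it is in hand the present statement follows directly from the defining clause of $\ult$, so I do not expect any serious obstacle.
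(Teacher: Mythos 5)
Your proposal is correct and takes essentially the same route as the paper's own proof: unfold the left-hand side via the canonical $\acc\,f$ form of the accessibility witness so the defining clause of $\ult$ fires, unfold $\U\,j$ to $\uir_{j\,(\ult\,j\,\{\mathsf{<\!wf}\,j\})}$, and then identify the witnesses $f\,j\,p$ and $\mathsf{<\!wf}\,j$ by Lemma \ref{lem:accprop}, closing with congruence. Your closing remark that proof-irrelevance of $\Acc$ is the sole substantive ingredient matches the paper's reasoning exactly.
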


\begin{mydefinition}[Semantic $\Lift$]
We define by induction on $\uir$ a function with type $(p : i < j)\to(A : \U\,i)
\to (A' : \U\,j)\times(\El\,A' = \El\,A)$. However, for the sake of clarity, we
present this here as two (mutual) functions:
\begin{alignat*}{3}
  &\Lift\,      &&: (p : i < j) \to \U\,i \to \U\,j\\
  &\msf{ElLift} &&: (p : i < j)\to(A : \U\,i) \to \El\,(\Lift\,A) = \El\,A
\end{alignat*}
Let us look at $\Lift$ first:
\begin{alignat*}{3}
  &\Lift\,p\,(\U'\,k\,q)    &&= \U'\,k\,(p \circ q)\\
  &\Lift\,p\,(\Pi'\,A\,B)   &&= \Pi'\,(\Lift\,p\,A)\,(\lambda\,a.\,\Lift\,p\,(B\,a))\\
  &\Lift\,p\,\bot'          &&= \bot'\\
  &\Lift\,p\,\Bool'         &&= \Bool'
\end{alignat*}
Above, the $\Pi'$ definition is well-typed by $\msf{ElLift}\,p\,A$. For the proof of $\msf{ElLift}$,
the only interesting case is $\U'$. Here, we need to show $\ult\,j\,k\,(p \circ q) = \ult\,i\,k\,q$,
but by Lemma \ref{lem:ucomp} both sides are $\U\,k$.
\end{mydefinition}

\begin{mylemma}\label{lem:liftprop} Properties of $\Lift$:
\begin{enumerate}
  \item $\Lift$ preserves all basic type formers; this is immediate from the definition.
  \item $\Lift$ is functorial, i.e.\ $\Lift\,(p \circ q)\,A = \Lift\,p\,(\Lift\,q\,A)$. This follows
    by induction on $A$, and we make use of the irrelevance of $\blank\!<\!\blank$ in the $\U'$ case.
    \qed
\end{enumerate}
\end{mylemma}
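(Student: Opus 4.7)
The plan is to prove (1) directly from the defining clauses of $\Lift$, and (2) by induction on $A : \U\,i$ (which unfolds to an inhabitant of $\uir$), relying on the irrelevance of $\blank\!<\!\blank$ in the $\U'$ case. Fix $p : j < k$ and $q : i < j$, so that $p \circ q : i < k$.

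For (1), reading the clauses of $\Lift$ directly yields
$\Lift\,p\,(\Pi'\,A\,B) = \Pi'\,(\Lift\,p\,A)\,(\lambda a.\,\Lift\,p\,(B\,a))$,
$\Lift\,p\,\bot' = \bot'$ and $\Lift\,p\,\Bool' = \Bool'$, which are precisely the required preservation equations for $\Pi'$, $\bot'$ and $\Bool'$. For the $\U'$ former there is no analogous preservation equation; its clause is the nontrivial $\U'\,k\,q \mapsto \U'\,k\,(p \circ q)$, which is the content that drives the rest of the theory.

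For (2), induct on $A$. The cases $\bot'$ and $\Bool'$ are immediate since both sides reduce to the same constant code. For $A = \Pi'\,A_0\,B$, both sides unfold to $\Pi'$-codes whose domain is obtained by applying either one or two liftings to $A_0$, and similarly pointwise for the codomain family; the induction hypotheses on $A_0$ and on each $B\,a$, combined with function extensionality, close this case. The crucial case is $A = \U'\,l\,r$ with $r : l < i$: here we compute
\[
  \Lift\,(p\circ q)\,(\U'\,l\,r) \;=\; \U'\,l\,((p\circ q)\circ r),
  \qquad
  \Lift\,p\,(\Lift\,q\,(\U'\,l\,r)) \;=\; \U'\,l\,(p\circ(q\circ r)),
\]
and both associations are proofs of $l < k$, so by $\mathsf{<\!prop}$ they coincide and the two codes are equal.

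The main obstacle I expect is the $\Pi'$ case, because the body $\lambda a.\,\Lift\,p\,(B\,a)$ is only well-typed after the identification $\El\,(\Lift\,p\,A_0) = \El\,A_0$ provided by $\msf{ElLift}$, and composing liftings threads these identifications through differently on the two sides. A fully rigorous argument has to check a coherence between $\msf{ElLift}\,(p\circ q)$ and the composite of $\msf{ElLift}\,q$ followed by $\msf{ElLift}\,p$; that coherence itself recurses into the $\U'$ case and ultimately reduces to Lemma \ref{lem:ucomp} and $\mathsf{<\!prop}$. In our extensional metatheory with equality reflection and UIP (and with $\Acc$ being a proposition by Lemma \ref{lem:accprop}), all such coherences collapse to reflexivity, so the induction schema carried by the mutual definition reduces cleanly to the four cases treated above.
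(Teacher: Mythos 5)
Your proposal is correct and follows essentially the same route as the paper: part (1) is read off from the defining clauses of $\Lift$, and part (2) is an induction on the code $A$ in which the $\U'$ case is closed by proof-irrelevance of $\blank\!<\!\blank$ (via $\mathsf{<\!prop}$) applied to the two composites. Your extra remark about the coherence between $\msf{ElLift}\,(p\circ q)$ and the composite of $\msf{ElLift}\,q$ and $\msf{ElLift}\,p$ in the $\Pi'$ case is a detail the paper leaves implicit, and you resolve it exactly as the paper's setting intends, by collapsing it via equality reflection and UIP.
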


\subsection{Inductive-Recursive Model of TTGU}

We give a model of TTGU in this section.

\begin{notation} To avoid name clashing between components of the model and metatheoretic
definitions, we use \textbf{bold} font to refer to TTGU components.
\end{notation}

\begin{mydefinition}[Base category]
The base category is simply the category of sets and functions in $\Set0$, i.e.\ $\bm{\Con} =
\Set0$, $\bm{\Sub}\,\Gamma\,\Delta = \Gamma \to \Delta$, and the terminal object is $\top$.
\end{mydefinition}

\begin{mydefinition}[Family diagram]
We map $i : \Lvl$ to a family structure as follows.
\begin{alignat*}{5}
  &\bm{\Ty}_i\,\Gamma = \Gamma \to \U\,i \hspace{2em} \bm{\Tm}_i\,\Gamma\,A = (\gamma : \Gamma) \to \El\,(A\,\gamma)
\end{alignat*}
Type and term substitution are given by composition with some function $\sigma : \Gamma
\to \Delta$. Comprehension structure is given by $\Gamma \bm{\ext} A =
(\gamma : \Gamma) \times \El\,(A\,\gamma)$. Type lifting along $p : i < j$ is
as follows:
\begin{alignat*}{5}
  & \bm{\Lift}_{i}^{j}p : \bm{\Ty}_i\,\Gamma \to \bm{\Ty}_j\,\Gamma\\
  & \bm{\Lift}_{i}^{j}p\,A = \lambda\,\gamma.\,\Lift_{i}^{j}p\,(A\,\gamma)
\end{alignat*}
Now, two of the strict inclusion equations follow from $\msf{ElLift}$, namely
$(\Gamma \bm{\ext} \bm{\Lift}_{i}^{j}p\,A) = (\Gamma \bm{\ext} A)$ and
$\bm{\Tm}_j\,\Gamma\,(\bm{\Lift}_{i}^{j}p\,A) = \bm{\Tm}_i\,\Gamma\,A$. Thus, we
can just define term lifting as $\bm{\up}_{i}^{j}\!p\,t = t$ and
$\bm{\down}_{i}^{j}\!p\,t = t$. Basic type formers are as follows.
\begin{alignat*}{5}
  & \bm{\Pi}\,A\,B = \lambda\,\gamma.\,\Pi'\,(A\,\gamma)\,(\lambda\,\alpha.\,B\,(\gamma,\,\alpha))
  & \hspace{2em}\bm{\bot}_i = \lambda\,\gamma.\,\bot'
  & \hspace{2em}\bm{\Bool}_i = \lambda\,\gamma.\,\Bool'
\end{alignat*}
$\bm{\Lift}_{i}^{j}p$ preserves type formers by Lemma \ref{lem:liftprop}. We
define basic term formers and eliminators using metatheoretic features, e.g.\ $\bm{\true}_i
= \lambda\,\gamma.\,\true$ and $(\bm{\lambda}_i\,x.\,t) =
\lambda\,\gamma\,\alpha.\,t\,(\gamma,\,\alpha)$. Note that since semantic term
formers are just external constructors, they do not depend on levels, so
e.g.\ $\bm{\true}_i$ is the same at all $i$. This implies that
$\bm{\up}_{i}^{j}\!p$ preserves term formers as well, so ($\bm{\Lift}_{i}^{j}p$,
$\bm{\up}_{i}^{j}\!p$, $\bm{\down}_{i}^{j}\!p$) is a strict family inclusion.

We define universes as $\bm{\U}\,i\,j\,p = \lambda\,\gamma.\,\U'_i\,j\,p$.  With
this, $\bm{\Lift}_{j}^{k}\,p\,(\bm{\U}\,i\,j\,q) = \bm{\U}\,i\,k\,(p \circ q)$
follows by the definition of semantic $\Lift$. The Russell universe equation
$\bm{\Tm}_j\,\Gamma\,(\bm{\U}\,i\,j\,p) = \bm{\Ty}_i\,\Gamma$ follows from Lemma
\ref{lem:ucomp}, so we can define $\bm{\El}$ and $\bm{\Code}$ as identity functions.

\end{mydefinition}

\begin{theorem}[Consistency of TTGU]
There is no closed syntactic term of $\bot_i$ for any $i$.
\end{theorem}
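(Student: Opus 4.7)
The plan is the standard semantic consistency argument: use the inductive-recursive model just constructed to interpret the syntax, and observe that a closed syntactic inhabitant of $\bot_i$ would produce an inhabitant of the metatheoretic empty type $\bot$.

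First I would invoke initiality of the syntax. The paper declares that we regard the syntax of TTGU as the initial model, and since TTGU (with its chosen $\Lvl$ structure, strict family diagram, Russell universes, and basic type formers) is essentially algebraic, initiality gives a unique structure-preserving interpretation $\llbracket\blank\rrbracket$ from the syntactic model into the inductive-recursive model constructed in Section~\ref{sec:semantics}. This interpretation sends each syntactic object to the component of the same name in bold, i.e.\ $\llbracket \Con \rrbracket \subseteq \bm{\Con}$, $\llbracket \Sub \rrbracket \subseteq \bm{\Sub}$, $\llbracket \Ty_i \rrbracket \subseteq \bm{\Ty}_i$, $\llbracket \Tm_i \rrbracket \subseteq \bm{\Tm}_i$, and preserves $\emptycon$, $\bot_i$, and all the other structure.

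Next, suppose for contradiction that there is a closed syntactic term $t : \Tm_i\,\emptycon\,\bot_i$ for some $i : \Lvl$. Applying $\llbracket\blank\rrbracket$ gives
\[
  \llbracket t \rrbracket : \bm{\Tm}_i\,\llbracket\emptycon\rrbracket\,\llbracket\bot_i\rrbracket
  = \bm{\Tm}_i\,\top\,\bm{\bot}_i
  = (\gamma : \top) \to \El\,(\bm{\bot}_i\,\gamma)
  = \top \to \El\,\bot'
  = \top \to \bot,
\]
using that the terminal context interprets as $\top$, that $\bm{\bot}_i = \lambda\gamma.\,\bot'$, and the defining equation $\elir\,\bot' = \bot$ from the inductive-recursive definition. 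Then $\llbracket t \rrbracket\,\tt : \bot$ in the metatheory, contradicting the fact that the metatheoretic $\bot$ is uninhabited.

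The main obstacle I anticipate is justifying initiality and the existence of the interpretation morphism at the level of rigour expected. The argument above silently assumes that the syntax is a well-defined initial object in the category of TTGU models, and that the semantic construction really assembles into such a model (including naturality with respect to substitution of every type and term former, and functoriality of the strict family diagram). Once those bookkeeping points are in hand — and the paper gestures at them via the ``algebraic character'' remark and Lemma~\ref{lem:liftprop} — the rest is just tracing definitions, so I would not belabour the routine computations.
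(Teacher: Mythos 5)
Your proof is correct and takes essentially the same approach as the paper: the paper's proof is a one-liner that interprets the hypothetical closed term of $\bot_i$ in the inductive-recursive model, obtaining an inhabitant of the metatheoretic $\bot$ and hence a contradiction. Your version merely makes explicit the initiality/interpretation bookkeeping and the definitional unfolding ($\bm{\Tm}_i\,\top\,\bm{\bot}_i = \top \to \bot$) that the paper leaves implicit.
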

\begin{proof}
Assuming a syntactic $t : \Tm_i\,\emptycon\,\bot_i$, we can interpret it in the
previously given model, which yields an inhabitant of the metatheoretic $\bot$,
hence a contradiction.
\end{proof}

\section{First-Class Universe Levels}
\label{sec:ttfl}

In the following, we specify and model type theories where levels and
their morphisms are represented by internal types.

However, it would be awkward to pick a particular structure for levels, and
specify a type theory which internalizes that structure; for example
internalizing levels as natural numbers. We do not want to repeat the
specification and semantics for each choice of level structure; instead, we aim
to have a more generic solution.
\begin{enumerate}
  \item We first give a specification of \emph{type theory with dependent
        levels}, or TTDL, where levels and level morphisms may depend on typing contexts.
        Here, liftings, universes and type formers are specified, but the internal
        structure of levels is not yet pinned down.
  \item We show that we can extend TTDL with \emph{level reflection} rules, which identify
        levels with particular internal types, thereby getting \emph{type theories with first-class
        levels}, or TTFL.
\end{enumerate}
This decreases the amount of work that we have to do, in order to get semantics
for different level setups. We only need to pick an external level structure
such that it can be also represented using TTDL type formers.

\begin{mydefinition} A model of TTDL consists of the following.
\begin{enumerate}
\item
  A base category ($\Con$, $\Sub$) with terminal object $\emptycon$.
\item
  A ``dependent'' level structure on the base category:
  \begin{alignat*}{3}
    &\Lvl                  &&: \Con \to \Seti\\
    &\blank\!<\!\blank     &&: \{\Gamma : \Con\} \to \Lvl\,\Gamma \to \Lvl\,\Gamma \to \Seti\\
    &<\!\msf{prop}         &&: (p\,q : i < j) \to p = q\\
    &\blank\!\circ\!\blank &&: j < k \to i < j \to i < k
  \end{alignat*}
  Additionally, $\Lvl$ and $\blank\!<\!\blank$ are natural in the base category,
  so they support substitution operations. \emph{Remark:} at this point, we do not require well-foundedness
  for $\blank\!<\!\blank$, as it has no bearing on basic lifting and universe rules, and
  well-foundedness will be usually internally provable when we add level reflection rules.
\item
  A ``bootstrapping'' assumption on levels. This can be any non-empty collection
  of levels and morphisms. It will be used shortly in Section
  \ref{sec:level_reflection}, where we specify first-class levels using the syntax
  (i.e.\ the initial model) of TTDL. Without bootstrapping, the syntax is
  trivial and has no closed types. Of course, models of TTDL in general make
  sense without the bootstrapping assumption.

  We pick the assumption that $l_0, l_1 : \Lvl\,\Gamma$ exist together with
  $l_{01} : l_0 < l_1$. This allows large eliminations on type formers, so it
  provides a fair amount of power for specifying internal levels.
\item
  A family structure:
  \begin{alignat*}{3}
    &\Ty &&: (\Gamma : \Con) \to \Lvl\,\Gamma \to \Seti\\
    &\Tm &&: (\Gamma : \Con)\{i : \Lvl\,\Gamma\} \to \Ty\,\Gamma\,i \to \Seti\\
    &\blank\ext\blank &&: (\Gamma : \Con)\{i : \Lvl\,\Gamma\} \to \Ty\,\Gamma\,i \to \Con
  \end{alignat*}
  We have type and term substitution, which depends on level substitution. For instance, we have:
  \[
    \blank[\blank] : \Ty\,\Delta\,i \to (\sigma : \Sub\,\Gamma\,\Delta) \to \Ty\,\Gamma\,(i[\sigma])
  \]
  We also have a comprehension isomorphism $\Sub\,\Gamma\,(\Delta \ext A) \simeq
  (\sigma : \Sub\,\Gamma\,\Delta)\times \Tm\,\Gamma\,(A[\sigma])$, which is
  natural in $\Gamma$.
\item A lifting structure with
  \begin{alignat*}{3}
    &\Lift &&: \{\Gamma : \Con\}\{i\,j : \Lvl\,\Gamma\} \to i < j \to \Ty\,\Gamma\,i \to \Ty\,\Gamma\,j\\
    &\up   &&: \{\Gamma : \Con\}\{i\,j : \Lvl\,\Gamma\}(p : i < j) \to \Tm\,\Gamma\,A \to \Tm\,\Gamma\,(\Lift\,p\,A)
  \end{alignat*}
  Such that
  \begin{enumerate}
    \item $\Lift$ preserves all basic type formers and has functorial action on $p \circ q$.
    \item $\up$ has an inverse $\down$, preserves all basic term formers and has functorial action on $p \circ q$.
    \item $(\Gamma \ext A) = (\Gamma \ext \Lift\,p\,A)$, and $\Tm\,\Gamma\,A = \Tm\,\Gamma\,(\Lift\,p\,A)$ and $\up\,t = t$.
  \end{enumerate}

  Above we mention basic type formers, although we have not yet specified
  those. The way this should be understood, is that any basic type former
  introduced from now on should come equipped with preservation equations for
  lifting. This is similar to how we mandate that any introduced type former
  must be natural with respect to substitution.
\item A universe structure
  \begin{alignat*}{6}
    &\U  &&: \{\Gamma : \Con\}(i\,j : \Lvl\,\Gamma) \to i < j \to \Ty\,\Gamma\,j\hspace{2em}
    &\El\,\,&&: \Tm\,\Gamma\,(\U\,i\,j\,p) \to \Ty\,\Gamma\,i
  \end{alignat*}
  such that $\Lift\,p\,(\U\,i\,j\,q) = \U\,i\,k\,(p \circ q)$, $\El$ has
  inverse $\Code$, $\Tm\,\Gamma\,(\U\,i\,j\,p) = \Ty\,\Gamma\,i$ and $\El\,t =
  t$.
\item Basic type formers.
\end{enumerate}
\end{mydefinition}

\begin{mydefinition}[Inductive-recursive model of TTDL]\label{def:ttdlmodel}
 Assume an external $\Lvl$ structure that supports $l_0, l_1 : \Lvl$ and $l_{01}
 : l_0 < l_1$ (the bootstrapping assumption). We again use the universe
 constructions from Section \ref{sec:inductive_recursive_codes}, instantiated to
 the assumed $\Lvl$ structure. We describe components of the model in
 order. Again, we write components of the model in \textbf{bold} font.
\begin{enumerate}
\item The base category remains unchanged from the TTGU model.
\item For the level structure, we define $\bm{\Lvl}\,\Gamma = \Gamma \to \Lvl$ and
  $i \bm{<} j = (\gamma : \Gamma) \to i\,\gamma < j\,\gamma$. Subsitution for internal
  levels and morphisms is given by function composition with $\sigma : \Gamma \to \Delta$.
  Internal composition and $\bm{<\!\msf{prop}}$ follow from the external counterparts.
\item The internal bootstrapping assumption is modeled with the external counterpart.
\item We define $\bm{\Ty}\,\Gamma\,i = (\gamma : \Gamma) \to \U\,(i\,\gamma)$ and
  $\bm{\Tm}\,\Gamma\,A = (\gamma : \Gamma) \to \El\,(A\,\gamma)$. Substitution is
  again function composition, and we have $\Gamma \bm{\ext} A = (\gamma :
  \Gamma)\times \El\,(A\,\gamma)$.
\item Type lifting is given by $\bm{\Lift}\,p\,A = \lambda\,\gamma.\,
  \Lift\,(p\,\gamma)\,(A\,\gamma)$.  Similarly as in the TTGU model,
  $\bm{\Tm}\,\Gamma\,A = \bm{\Tm}\,\Gamma\,(\bm{\Lift}\,p\,A)$ and $\Gamma \bm{\ext}
  A = \Gamma \bm{\ext} (\bm{\Lift}\,p\,A)$ follow from the $\msf{ElLift}$
  equality, and term lifting is the identity function.
\item We define $\bm{\U}\,i\,j\,p = \lambda\,\gamma.\,\U'\,(i\,\gamma)\,(p\,\gamma)$. Again,
  we have $\bm{\Tm}\,\Gamma\,(\U\,i\,j\,p) = \bm{\Ty}\,\Gamma\,i$ by Lemma \ref{lem:ucomp},
  and $\bm{\El}$ and $\bm{\Code}$ are identity functions.
\item Basic type formers are interpreted using $\uir$ codes. Preservation of
  type and term formers by lifting follows by the definition of $\Lift$ and
  $\El$.
\end{enumerate}
\end{mydefinition}

To summarize, the only interesting change compared to the TTGU model is that
levels and level morphisms gain potential dependency on contexts. However, in
the inductive-recursive model this is simply the addition of an extra semantic
function parameter.

\subsection{Level Reflection}\label{sec:level_reflection}

\begin{mydefinition}[Level reflection rules]
Assume that we have definitions for internal levels in the syntax of TTDL,
i.e.\ all of the following are defined:
\begin{alignat*}{3}
  &\Lvl^I               &&: \Ty\,\Gamma\,l_0\\
  &l_0^I,\,l_1^I         &&: \Tm\,\Gamma\,\Lvl^I \\
  &\blank\!<^I\!\blank  &&: \Tm\,\Gamma\,\Lvl^I \to \Tm\,\Gamma\,\Lvl^I \to \Ty\,\Gamma\,l_0\\
  &l_{01}^{I}            &&: \Tm\,\Gamma\,(l_0^I <^I l_1^I)
\end{alignat*}
A \emph{reflection rule} for the above consists of
\begin{enumerate}
  \item $\mkLvl : \Tm\,\Gamma\,\Lvl^I \to \Lvl\,\Gamma$ with its inverse $\unLvl$, such that $\mkLvl\,l_0^I = l_0$ and $\mkLvl\,l_1^I = l_1$.
  \item $\mkMor\,\,: \Tm\,\Gamma\,(i <^I j) \to \mkLvl\,i < \mkLvl\,i$ with its inverse
        $\unMor$.
\end{enumerate}
\end{mydefinition}

For any definition of internal levels, we may extend the specification of TTDL
with the corresponding reflection rule, thereby getting an algebraic signature
for a type theory with first-class levels (TTFL). We can easily get a TTFL with
an inductive-recursive model in the following way. First, we pick an external
$\Lvl$ structure which a) satisfies the bootstrapping assumption b) has sets of
levels and morphisms which can be represented with syntactic TTDL types.

For example, if $\Lvl$ = ($\Nat$, $\blank\!<\!\blank$), with $l_0 = 0$ and $l_1
= 1$, and TTDL supports natural numbers, then we can define $\Lvl^I$ as the
internal $\Nat_{l_0}$, and define $\blank\!<^I\!\blank$ as the usual ordering of
numbers, using TTDL type formers and large elimination (which is available from
$l_0 < l_1$). Then it follows that the model in Definition \ref{def:ttdlmodel},
instantiated to the current level structure, satisfies level reflection. The
model even supports the stricter $\Tm\,\Gamma\,\Nat_{l_0} = \Lvl\,\Gamma$
equation, but in general it is easier to set up models if only an isomorphism
is required.

\subsection{Universe Features in TTFL}

We describe some of the features expressible in TTFL.

\textbf{Bounded universe polymorphism} is realized by quantifying over levels
and morphisms with the usual $\Pi$ types. For example, if levels strictly
correspond to internal natural numbers, we may have
\begin{alignat*}{3}
  &\msf{idUpTo3} : \Pi(l : \Nat_3)(p : \Lift_0^3(l <^I 3))(A : \U\,l\,3\,(\mkMor\,p)) \to \Lift\,(\mkMor\,p)\,A \to \Lift\,(\mkMor\,p)\,A\\
  &\msf{idUpTo3} = \lambda\,l\,p\,A\,a.\,a
\end{alignat*}
Here, we make sure that all types are on the same level, by appropriate
lifting. We assume that internal levels are in $\Nat_0$, but we can bind an $l :
\Nat_3$, because by cumulativity $l$ is also a term of $\Nat_0$. Likewise, the
$p$ variable is a term of $\Lift_0^3(l <^I 3)$ and $l <^I 3$ as well.

\textbf{Transfinite hierarchies} are naturally supported. For example, $\Lvl$
can be identified with $\msf{Maybe}\,\Nat_0$, where $\msf{Nothing}$ defines
$\omega$ and $\msf{Just}\,n$ is a finite level. Then, by the definition of
morphisms, we have $<\!\omega : \Pi(n : \Nat_0) \to \msf{Just}\,n <^I \omega$.
We can use this to quantify over finite levels, as in the following type:
\begin{alignat*}{3}
  \Pi(n : \Nat_{\omega})(A : \U\,n\,\omega\,(\mkMor\,(<\!\omega\,n))) \to \Lift\,(\mkMor\,(<\!\omega\,n))\,A \to \Lift\,(\mkMor\,(<\!\omega\,n))\,A
\end{alignat*}
This type is in $\Ty\,\Gamma\,\omega$, but it is not in any universe, since
$\omega$ is the greatest level.

\textbf{Induction on levels and level morphisms.} In Agda 2.6.1, there is an
internal type of finite levels, and while construction rules and some built-in
operations on levels are exposed, there is no general elimination rule on
levels. Thus, there is a $\Nat \to \Lvl$ conversion function but it has no
inverse. In contrast, TTFL supports arbitrary elimination on levels and
morphisms.

\textbf{Type formers returning in least upper bounds of levels}. It is common in
type theories to allow type formers to have parameter types in different
universe levels, say $i$ and $j$, and return in level $i \sqcup j$. In TTFL,
whenever levels are \emph{trichotomous}, meaning that the ordering and equality
of levels is internally decidable, $i \sqcup j$ can be defined as the greater of
$i$ and $j$, and the ``heterogeneous'' type formers are derivable\footnote{A
level structure which is trichotonomous and supports \emph{extensionality},
i.e.\ $(\forall i.\, (i < j) \iff (i < k)) \to j = k$, is a
\emph{type-theoretic ordinal}. Assuming excluded middle, type-theoretic
ordinals are equivalent to classical ordinals \cite[Section 10.3]{hottbook}.}.

\textbf{Coercive cumulative subtyping.}
TTFL as specified does not directly support cumulative subtyping. However, it is
compatible with coercive subtyping. Consider the following rules:
\begin{alignat*}{3}
  &\blank\!\leq\!\blank &&: \Ty\,\Gamma\,i \to \Ty\,\Gamma\,j \to \Seti\\
  &\msf{coerce} &&: A \leq B \to \Tm\,\Gamma\,A \to \Tm\,\Gamma\,B\\
  &\leq\!\msf{refl} &&: A \leq A \\
  &\U\!\leq    &&: i < i' \to \U\,i\,j\,p \leq \U\,i'\,k\,q\\
  &\Pi\!\leq   &&: (p : A' \leq A) \to ((a' : \Tm\,\Gamma\,A')
  \to B[x \mapsto \msf{coerce}\,p\,a'] \leq B'[x \mapsto a'])\\
  & && \hspace{1em} \to \Pi(x : A) B \leq \Pi (x : A') B'
\end{alignat*}
Any model of TTFL can support the above rules: we can define
$\blank\!\leq\!\blank$ and $\msf{coerce}$ by indexed induction-recursion
\cite{indexedir}, where we define coercion along $\U\!\leq$ by type lifting, and
coercion along $\Pi\!\leq$ by backwards-forwards coercion. It is possible to
extend the subtyping relation with rules for other basic type formers.

Note that $\Pi$ is contravariant in the domain. This is easily supported with
our inductive-recursive semantics, unlike in the set-theoretic model of
cumulativity for Coq \cite{timany18cumulative}, where function domains are
invariant.

\subsection{Effects of Choice of Level Structure}

TTFL features clearly vary depending on level structures. We make some basic
observations.

\begin{itemize}
\item We did not mandate that the level of $\Lvl^I$ is the least level,
i.e.\ that $l_0 < i$ for every $i \neq l_0$. If this holds, then it is
possible to have level polymorphism at every level: at $l_0$ we can just bind
a $\Lvl^I$, and at every other level, we can lift $\Lvl^I$ to that
level. However, levels are not necessarily totally ordered, and $l_0$ does not
have to be the least. This means that universe polymorphism is prohibited in
levels which are not connected to $l_0$.

\item If levels are given by a limit ordinal, then every TTFL type is contained
in a universe. If levels form a successor ordinal, then this is not the
case. For example, Agda 2.6.1 has $\omega + 1$ levels (externally), where
$\Set{\omega}$ is the topmost universe, but $\Set{\omega}$ is not in any
universe.

\item
While it is possible to quantify over all levels (using plain $\Pi$ types), it
is not possible to have level polymorphism over all levels. We may try to type
an identity function for all levels, as $\Pi(i : \Lift\,?\,\Lvl^I)(A :
\U\,(\mkLvl\,i)\,?\,?) \to \Lift\,?\,A \to \Lift\,?\,A$. The issue is in
$\U\,(\mkLvl\,i)\,?\,?$, where we would have to find a level which is larger
than \emph{every} level. The solution to this issue is to simply add more
levels. For example, for polymorphism over finite levels, we may pick $\omega +
\omega$ as the first limit ordinal which can internalize finite level
polymorphism; this is what Agda 2.6.2 does.
\end{itemize}

\section{Related Work} \label{sec:related}

Predicative hierarchies originate from Russell's ramified type theories
\cite{principia}. In the more modern formulations of type theory, Martin-Löf
proposed a countable predicative hierarchy \cite{martinlof73predicative}, as a
way to remedy the inconsistency of the previous version of the theory (which
assumed type-in-type). Harper and Pollack described universe inference with
level assignments and also a form of level polymorphism \cite{harperpollack}.
Sterling \cite{sterling2019algebraic} gave an algebraic specification much like
ours for a type theory with countable cumulative universes, and proved canonicity
for it.

There have been proposals for strengthening universes with various closure
principles and universe operators. Palmgren's super universes and higher-order
universes \cite{Palmgren98onuniverses} and Setzer's Mahlo universes
\cite{setzer00mahlo} are examples for this. These are sufficient to model
transfinite hierarchies, but as we noted in Section
\ref{sec:inductive_recursive_codes}, we do not know how to model strict
inclusions with them. Variants of induction-recursion \cite{dybjer99finite,
  indexedir, positiveir} are particularly flexible and powerful extensions to
universes. McBride gave an inductive-recursion definition of cumulative
universes that we adapted in this work \cite{mcbride2015datatypes}.

It is worth to summarize here the universe features in the current
type theory implementations.

\textbf{Agda 2.6.1} has $\omega+1$-many non-cumulative predicative universes as
$\Set{i}$, with optional cumulative subtyping only for universes
\cite{agdadocs}.  It also has an internal type $\msf{Level} : \Set0$ for finite
levels (hence, exluding $\omega$), which supports constructors and some built-in
operations, but no general elimination rule. There is also a countable parallel
hierarchy $\msf{Prop_i}$ for strict propositions \cite{sprop}. Agda 2.6.2 will
extend the $\Set{i}$ hierarchy to $\omega * 2$.

\textbf{Coq 8.13} has $\omega$-many cumulative predicative universes with
cumulative subtyping for all type formers \cite{timany18cumulative}. It supports
bounded universe polymorphism, but it has no internal type for levels, and
universe polymorphic definitions are not internally typeable. It also has an
impredicative $\msf{Prop}$ universe and optionally impredicative bottom $\Seti$
universe. Version 8.13 added experimental support for a parallel
countable cumulative hierarchy for strict propositions.

\textbf{Lean 3.3} has countable non-cumulative predicative $\msf{Type}_i$
universes with universe polymorphism, and no internal type of levels
\cite{leanmanual}. It also has strict impredicative $\msf{Prop}$.

\textbf{Idris 1} has countable cumulative predicative universes with cumulative
subtyping only for universes, typical-ambiguity-style level inference and no
universe polymorphism \cite{idrisdocs}.

Of the above features, what TTFL does not support is a) impredicativity b) the
interaction of $\msf{Prop}$ and $\msf{Type}$ universes, i.e. the restrictions on
$\msf{Prop}$ elimination.

\section{Conclusion and Future Work} \label{sec:conclusion}

In the current work, we developed a framework for modeling a variety of universe
features in type theories. At this point, we may ask the question: if
induction-recursion is sufficient to model every feature, why not simply support
it in a practical implementation, and drop the menagerie of universe features?

The answer is that induction-recursion provides a \emph{deep embedding} of
universe features, which is usually less convenient to use than \emph{native}
features. For example, both Coq and Agda have powerful automatic solving for
filling out implicit universe levels. We also do not have to invoke $\El$ or the
$\ult$ computation rule explicitly, and in Coq we can use implicit syntax for
subtyping instead of explicit coercions.

This trade-off between convenience and formal minimalism is similar to the
situation with inductive types. Formally, W-types and identity types are easier
to handle than general inductive families, but the latter are far more
convenient to actually use. Ideally, we would like to justify complicated
convenience features by reduction to minimal features. With the current paper,
we hope to have made progress in this manner.

\subsection{Future Work}

Several related topics are not discussed in this paper and could be subject to
future work.

First, besides consistency, we are often interested in \emph{canonicity},
\emph{normalization} or other metatheoretical properties. The current work
focuses on consistency and leaves other properties to future work. We did keep
canonicity in mind when specifying the systems in this paper. Hopefully the
usual proof method of gluing (in other words, proof-relevant logical predicates)
\cite{coquandnorm,kaposi2019gluing,sterling2019algebraic} can be adapted to the theories in
this paper.

Second, we only focus on using universes as size-based classifiers for
types. Stratification features are also present in two-level type theory
\cite{twolevel}, modal type theories \cite{gratzer20multimodal} or as h-levels
in homotopy type theory \cite{hottbook}. It would be interesting to port
universe features in this paper to two-level type theory, as they would
hopefully model a form of stage polymorphism in multi-stage compilation. We
could try representing $\msf{Prop}$ universes in TTFL as well. This is closely
related to h-level based stratification.

Third, we do not discuss implementation strategies and ergonomics of universe
features. Which universe hierarchies support good proof automation? What kind of
impact do first-class levels have on elaboration algorithms? Hopefully the
current work can aid answering these questions, by at least giving a way to quickly
check if some features are logically consistent.

Lastly, we do not handle impredicative universes. The main reason for this is
that we do not know the consistency of having induction-recursion and
impredicative function space together in the same universe, and modeling
impredicativity seems to require this assumption in the metatheory. This could
be investigated as well in future work.

\bibliography{references.bib}
\end{document}